\pgfplotsset{compat=1.17}
\newtheorem{theorem}{Theorem}
\newtheorem{lemma}[theorem]{Lemma}
\theoremstyle{definition}
\newtheorem{definition}[theorem]{Definition}
\theoremstyle{remark}
\DeclareMathOperator{\median}{median}
\newcommand{\fakeparagraph}[2]{\par\noindent\textbf{#1}\hspace{1em}#2}
\newenvironment{myabstract}
{\list{}{\listparindent 1.5em%
		\itemindent    \listparindent
		\leftmargin    1cm
		\rightmargin   1cm
		\parsep        0pt}%
	\item\relax}
{\endlist}
\newenvironment{mycover}
{\list{}{\listparindent 0pt
		\itemindent    \listparindent
		\leftmargin    1cm
		\rightmargin   1cm
		\parsep        0pt}%
	\raggedright
	\item\relax}
{\endlist}
\newcommand{\myemail}[1]{\,$\cdot$\, {\small #1}}
\newcommand{\myaff}[1]{\,$\cdot$\, {\small #1}\par\medskip}
\begin{document}
	
\begin{mycover}
{\huge\bfseries\boldmath Fast Dynamic Programming in Trees in the MPC Model\par}
\bigskip
\bigskip

\textbf{Chetan Gupta}
\orcidlink{0000-0002-0727-160X} 
\myemail{chetan.gupta@aalto.fi}
\myaff{Aalto University, Finland}

\textbf{Rustam Latypov}
\orcidlink{0000-0001-7124-3067}
\myemail{rustam.latypov@aalto.fi}
\myaff{Aalto University, Finland}

\textbf{Yannic Maus}
\orcidlink{0000-0003-4062-6991}
\myemail{yannic.maus@ist.tugraz.at}
\myaff{TU Graz, Austria}

\textbf{Shreyas Pai}
\orcidlink{0000-0003-2409-7807}
\myemail{shreyas.pai@aalto.fi}
\myaff{Aalto University, Finland}

\textbf{Simo Särkkä}
\orcidlink{0000-0002-7031-9354}
\myemail{simo.sarkka@aalto.fi}
\myaff{Aalto University, Finland}

\textbf{Jan Studen\'y}
\orcidlink{0000-0002-9887-5192}
\myemail{jan.studeny@aalto.fi}
\myaff{Aalto University, Finland}

\textbf{Jukka Suomela}
\orcidlink{0000-0001-6117-8089}
\myemail{jukka.suomela@aalto.fi}
\myaff{Aalto University, Finland}

\textbf{Jara Uitto}
\orcidlink{0000-0002-5179-5056}
\myemail{jara.uitto@aalto.fi}
\myaff{Aalto University, Finland}

\textbf{Hossein Vahidi}
\orcidlink{0000-0002-0040-1213}
\myemail{hossein.vahidi@aalto.fi}
\myaff{Aalto University, Finland} 

\bigskip
\end{mycover}

\begin{myabstract}
	\fakeparagraph{Abstract.}
    We present a deterministic algorithm for solving a wide range of \emph{dynamic programming problems} in trees in $O(\log D)$ rounds in the massively parallel computation model (MPC), with $O(n^\delta)$ words of local memory per machine, for any given constant $0 < \delta < 1$.
    Here $D$ is the diameter of the tree and $n$ is the number of nodes---we emphasize that our running time is independent of $n$.
    
    Our algorithm can solve many classical \emph{graph optimization problems} such as maximum weight independent set, maximum weight matching, minimum weight dominating set, and minimum weight vertex cover.
    It can also be used to solve many \emph{accumulation} tasks in which some aggregate information is propagated upwards or downwards in the tree---this includes, for example, computing the sum, minimum, or maximum of the input labels in each subtree, as well as many inference tasks commonly solved with belief propagation.
    Our algorithm can also solve any \emph{locally checkable labeling problem} (LCLs) in trees.
    Our algorithm works for any reasonable representation of the input tree; for example, the tree can be represented as a list of edges or as a string with nested parentheses or tags.
    The running time of $O(\log D)$ rounds is also known to be necessary, assuming the widely-believed $2$-cycle conjecture.

    Our algorithm strictly improves on two prior algorithms:
    \begin{enumerate}
        \item Bateni, Behnezhad, Derakhshan, Hajiaghayi, and Mirrokni [ICALP'18] solve problems of these flavors in $O(\log n)$ rounds, while our algorithm is much faster in low-diameter trees. Furthermore, their algorithm also uses randomness, while our algorithm is deterministic.
        \item Balliu, Latypov, Maus, Olivetti, and Uitto [SODA'23] solve only locally checkable labeling problems in $O(\log D)$ rounds, while our algorithm can be applied to a much broader family of problems.
    \end{enumerate}
\end{myabstract}

\thispagestyle{empty}
\newpage

\section{Introduction}\label{sec:intro}

In this work we present a general, unified algorithm framework for solving a very wide variety of computational problems related to \emph{tree-structured data} in a massively parallel setting. Some examples of tasks that can be solved with our algorithm include:
\begin{itemize}
	\item Solving traditional graph optimization problems in trees (e.g., finding a maximum-weight independent set or mini\-mum-weight dominating set).
	\item Solving constraint-satisfaction problems in trees (e.g., finding a solution to any locally checkable labeling problem \cite{Naor1995}, as well as many generalizations of the theme).
	\item Analyzing large text documents with tree-structured data (e.g., processing large XML \cite{XML} documents).
	\item Aggregating information in trees (e.g., calculating the sum of inputs in each subtree \cite{GibbonsCS94}---this is a generalization of the classical prefix sum operation \cite{Ladner1980} from directed paths to rooted trees).
	\item Performing statistical inference in tree-structured graphical models (e.g., computations that are in the classical sequential setting commonly done with belief propagation \cite{Koller_book09}).
\end{itemize}

\subsection{Setting: MPC Model}

We work in the usual \emph{massively parallel computation} model (MPC) \cite{Karloff2010}. The size of the input is $n$ words---here $n$ is much larger than what fits in the local memory of a single computer, and therefore the input is distributed among multiple computers. The local memory of each computer is $\Theta(n^\delta)$ words, for some constant $0 < \delta < 1$. We have got $\Theta(n^{1-\delta})$ computers that take part in the computation, and hence in total $\Theta(n)$ words of distributed memory.

We will assume that the key bottleneck is communication between computers, and hence the time complexity is measured in the \emph{number of communication rounds}. We will assume that in one round each computer can send up to $\Theta(n^\delta)$ words to other computers and receive up to $\Theta(n^\delta)$ words from other computers. In essence, you can send everything you have in your local memory to someone else, and you can receive whatever fits in your local memory. When we refer to the \emph{running time} in this work, we always refer to the number of communication rounds (but we point out already here that in our algorithms local computation will also be lightweight).

\subsection{Prior Work: Solving LCL Problems Fast}\label{ssec:intro-prior-lcl}

In a recent work, Balliu, Latypov, Maus, Olivetti, and Uitto \cite{alkida22connectivity} presented efficient MPC algorithms for finding connected components, rooting trees, and solving so-called \emph{locally checkable labeling problems} (LCLs) in forests. As we directly build on their work, we will first briefly discuss their contributions.

LCL problems were first formalized by \cite{Naor1995}. These are graph problems that can be specified by listing a \emph{finite} set of feasible local neighborhoods. For example, ``$5$-coloring a graph of maximum degree $4$'' is an example of an LCL problem; we can list all properly $5$-colored neighborhoods that may occur in a graph of maximum degree $4$. Typically, constraint satisfaction problems are LCLs (as long as we have bounded degrees and a finite label set), while global optimization problems like maximum-weight independent set are not LCLs. 

The algorithms in \cite{alkida22connectivity} run in $O(\log D)$ rounds, where $D$ is the diameter of the input graph, with no asymptotic global memory overhead. Finding connected components and rooting are their main contributions, but here we are primarily interested in the part that solves LCL problems.

The algorithm for solving LCL problems consists of phases that compress the input graph; there are $O(1)$ phases and each phase takes $O(\log D)$ rounds. After phase $i$, they define a new LCL problem on the compressed graph such that its solution can be expanded into a solution for the LCL problem defined on the graph of phase $i-1$. After performing $O(1)$ phases the graph is compressed into a single node (the root of the tree) for which any LCL problem is trivially solved. The algorithm then finishes off with $O(1)$ reversal phases that decompress all compressed parts while simultaneously spreading the correct LCL solution to the decompressed parts of the graph.

\subsection{Key New Contributions: Unified Framework for Dynamic Programming Problems}\label{ssec:intro-new}

We build on \cite{alkida22connectivity} and present a new algorithm framework, with the following main features:
\begin{enumerate}
	\item We are able to solve a \emph{much} broader family of problems in $O(\log D)$ time---instead of solving only LCL problems, we can solve a much more general family of so-called \emph{dynamic programming problems} (see \cref{def:dp_problems}). We refer to \cref{tab:problem-examples} for some examples of the applicability of our framework in comparison with \cite{alkida22connectivity}.
	\item The prior algorithm \cite{alkida22connectivity} intermixes the tasks of compressing the tree and constructing the solution for an LCL. We show that it is possible to separate the concerns, as we will outline in \cref{ssec:intro-3step}. In particular, we can first use $O(\log D)$ rounds to construct a hierarchical clustering of the graph, and then with the help of the clustering, we can solve any dynamic programming problem in $O(1)$ rounds.
\end{enumerate}
The fastest prior algorithm for dynamic programming in the MPC model was the algorithm by Bateni, Behnezhad, Derakhshan, Hajiaghayi, and Mirrokni \cite{BateniBDHM18icalp,hajiaghayi18tree_dp}, but the running time of their algorithm is $O(\log n)$, which can be much worse than $O(\log D)$ in low-diameter trees, and moreover their algorithm is randomized while our algorithm is deterministic.

\begin{table}
    \newcommand{\yes}{$\checkmark$}
    \newcommand{\no}{---}
    \centering
    \begin{tabular}{l c c}
    \toprule
    Problem & Prior work \cite{alkida22connectivity} & This work \\
    \midrule
    Vertex coloring                                 & \yes & \yes \\
    Edge coloring                                   & \yes & \yes \\
    Maximal independent set                         & \yes & \yes \\
    \midrule
    Maximum weight independent set                  & \no  & \yes \\
    Maximum weight matching                         & \no  & \yes \\
    Minimum weight dominating set                   & \no  & \yes \\
    Minimum weight vertex cover                     & \no  & \yes \\ 
    Weighted max-SAT problem                        & \no  & \yes \\
    Longest path problem                            & \no  & \yes \\
    Sum coloring problem                            & \no  & \yes \\
    Counting matchings modulo $k$                   & \no  & \yes \\
    \midrule
    Tree median problem                             & \no  & \yes \\
    Inference in Bayesian graphical models          & \no  & \yes \\  
    Evaluating arithmetic expressions               & \no  & \yes \\
    Verifying the structure of XML-like documents   & \no  & \yes \\
    Computing the sum, minimum, or maximum \\
    \quad of the input labels in each subtree       & \no & \yes \\
    \bottomrule
    \end{tabular}
    \caption{Examples of problems solved with our framework and the prior work \cite{alkida22connectivity}.}\label{tab:problem-examples}
\end{table}

\subsection{Simple Three-Step Approach}\label{ssec:intro-3step}

Our algorithm framework proceeds in three steps:
\begin{enumerate}
	\item\label{step:intro-1} We turn the input into a \textbf{standard representation}; the running time of this phase is $O(\log D)$ rounds.
	We work with tree-structured data, but such data can be represented in different forms: we might have e.g.\ an unrooted tree that is represented as a long list of undirected edges, or we might have a rooted tree that is represented as a very long string (e.g.\ a string with nested parentheses or nested pairs of opening and closing tags). We will turn any such representation into a more convenient standard form: we will have a \emph{rooted tree} that is represented as \emph{list of directed edges}.
	We show that for a wide range of commonly-used representations of tree-structured data, this can be solved in $O(\log D)$ rounds. This is the only step  that depends on the precise input representation. We will give the details in \cref{sec:input_rep}.
	\item\label{step:intro-2} We construct a \textbf{hierarchical clustering} of the tree; the running time of this phase is $O(\log D)$ rounds.
	We will introduce the properties of the hierarchical clustering in \cref{ssec:intro-decomp}. We will show that such a clustering can be computed in $O(\log D)$ rounds. This step is fully generic---it depends neither on the input representation nor on the problem that we are solving. We will give the details in \cref{sec:decomposition}.
	\item\label{step:intro-3} We \textbf{solve the problem of interest}; the running time of this phase is $O(1)$ rounds.
	We show that we can solve a very wide variety of problems related to tree-structured data in $O(1)$ rounds, given the hierarchical clustering. We will give the details in \cref{sec:base}.
\end{enumerate}
Overall, this approach makes it possible to solve various computational problems in $O(\log D)$ rounds in trees. Furthermore, this results in algorithms that are \emph{conditionally optimal}: many problems that can be solved with this framework require $\Omega(\log D)$ rounds, assuming the (widely-believed) two-cycle conjecture~\cite{Andoni2018, Roughgarden18, Assadi2019, Ghaffari2019}.
The conjecture states that $\Omega(\log n)$ MPC-rounds are required to decide whether an input graph consists of a cycle of length $n$ or two cycles of length $n/2$, even if a polynomial number of machines is available.
It is known that this conjecture implies that finding connected components requires $\Omega(\log D)$ rounds~\cite{Behnezhad2019, Coy-det2022}, which in turn can be used to show that solving a subset of dynamic programming problems on trees requires $\Omega(\log D)$ rounds~\cite{alkida22connectivity}.

The main conceptual message of our work is this:
\begin{framed}
	\noindent
	There exists a single, convenient, universal representation that one can use as a starting point for designing very efficient massively parallel algorithms for tree-structured data.
\end{framed}
\noindent
We emphasize that the hierarchical clustering needs to be computed only once for a given input topology, and it can be reused for any dynamic programming problem and any input values.

\subsection{Hierarchical Clustering}\label{ssec:intro-decomp}

\begin{figure}
	\centering
	\includegraphics[page=2]{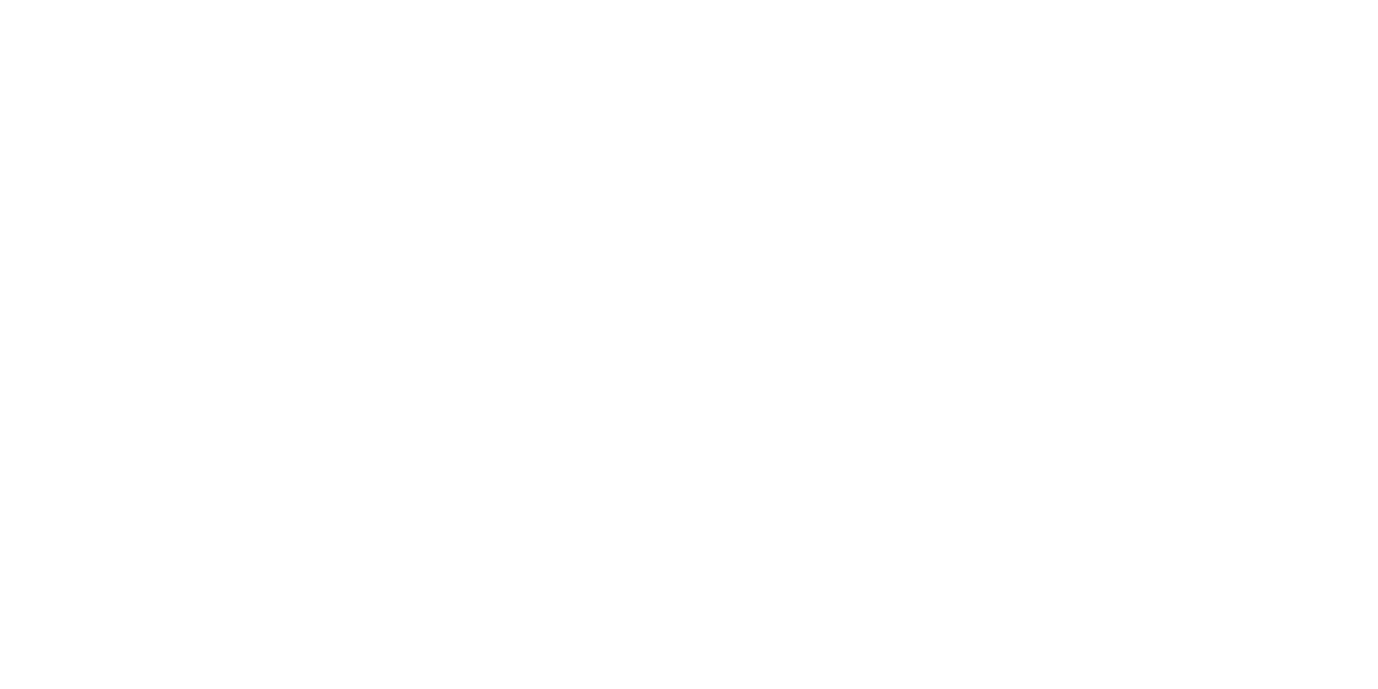}
	\caption{Our hierarchical clustering consists of constantly many layers. Layer $0$ is the input tree. At each layer we compress some disjoint collection of clusters so that eventually we have got only one node left. Each cluster contains at most $n^\delta$ nodes, each cluster has got exactly one outgoing edge, and there are zero or one incoming edges.}\label{fig:decomp}
\end{figure}

Our hierarchical clustering is illustrated in \cref{fig:decomp}. For convenience, we assume that all nodes of the tree have outdegree $1$; to ensure this we add at the root an additional virtual edge pointing outside the tree---this edge will be ignored when solving the problem of interest.

To construct the hierarchical clustering, we start with the original tree (this is our layer~$0$). To obtain layer $i+1$, we contract a \emph{cluster} of nodes into one node. The key properties that we ensure are:
\begin{itemize}[noitemsep]
	\item Each cluster contains only $O(n^\delta)$ nodes.
	\item Each cluster has outdegree $1$.
	\item Each cluster has indegree $0$ or $1$.
	\item There are only $O(1)$ layers, and the topmost layer consists of only one cluster.
\end{itemize}
We formally define the hierarchical clustering in \cref{sec:decomposition}, and we further show that it not only exists, but can also be computed in $O(\log D)$ rounds in the MPC model.

\subsection{Dynamic Programming Problems}\label{ssec:intro-what-how}

Our main focus is on problems that we will call \emph{dynamic programming problems}; as we will see in \cref{sssec:intro-example}, it is straightforward to adapt many typical optimization problems into this framework:
\begin{definition}\label{def:dp_problems}
	A dynamic programming problem (DP problem) is a computational problem in trees with the following properties:
	\begin{enumerate}
		\item The task is to compute a \emph{label} for each edge.
		\item We can summarize each cluster $C$ with a \emph{dynamic programming table} $f(C)$ that can be represented with $O(1)$ words.
		\item Given such summaries for all nodes that form a cluster $C$, we can compute in the dynamic programming table $f(C)$, using only $O(|C|)$ words of additional space; see \cref{fig:bottom-to-top}.
		\item We can compute the label for the outgoing edge of the top-level cluster $C$ given $f(C)$.
		\item Assuming that we know the labels of the incoming and outgoing edges of a cluster $C$ and the dynamic programming tables for each component of $C$, we can also compute the labels of all internal edges of cluster $C$, using only $O(|C|)$ words of additional space; see \cref{fig:top-to-bottom}.
	\end{enumerate}
\end{definition}
\noindent Here the labels of the edges are an abstraction of whatever is the specific task we are solving, while the dynamic programming tables are auxiliary data structures needed during the algorithm.

\begin{figure}
	\centering
	\includegraphics[page=5]{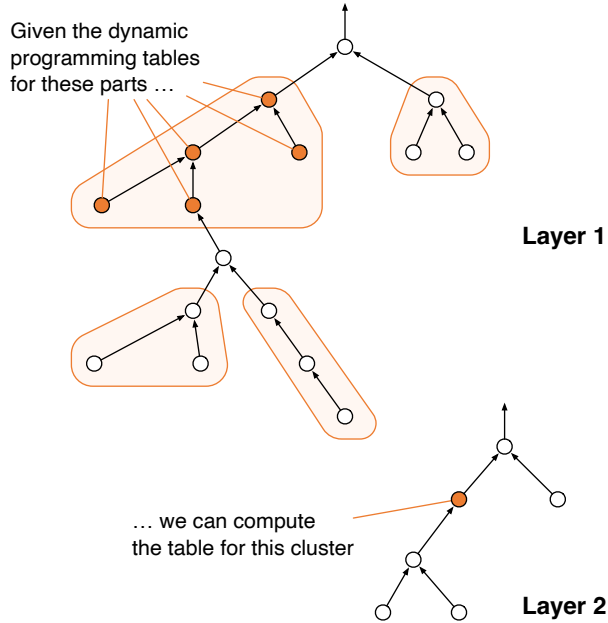}
	\caption{From bottom to top: given the summaries inside a cluster, we assume we can compute the summary for the entire cluster.}\label{fig:bottom-to-top}
\end{figure}

\begin{figure}
	\centering
	\includegraphics[page=6]{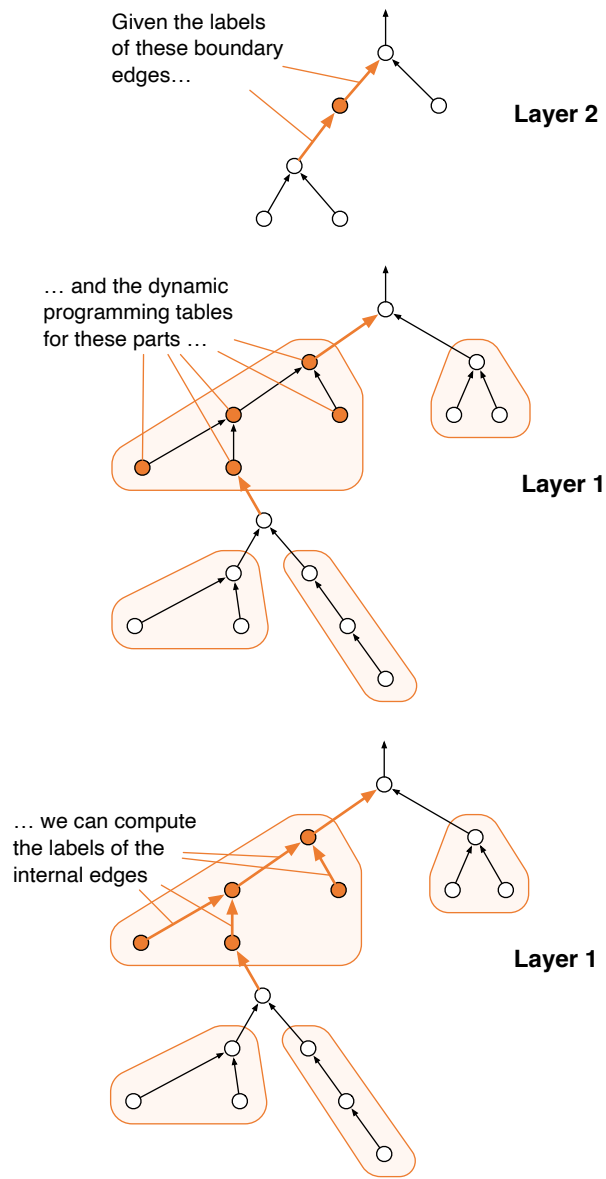}
	\caption{From top to bottom: given the solutions at the boundary edges, we assume we can compute the solution also for the internal edges.}\label{fig:top-to-bottom}
\end{figure}

\subsubsection{Example: Maximum-Weight Independent Set}\label{sssec:intro-example}

We will use the maximum-weight independent set problem (MaxIS) as a running example: in our input, each node has a nonnegative weight, and the task is to find a maximum-weight subset of nodes $X \subseteq V$ such that there is no edge $(u,v) \in E$ with $u \in X$ and $v \in X$.

Now the MaxIS problem is an example of a DP problem, with the following interpretation:
\begin{itemize}
	\item The label of the edge $(u,v)$ indicates whether $u \in X$.
	\item Let $C$ be an indegree-$0$ cluster, where $(u,v)$ is the outgoing edge. Then $f(C)$ is a table with two elements: (1)~the weight of the heaviest independent set in $C$ such that $u \in X$, and the (2)~the weight of the heaviest independent set in $C$ such that $u \notin X$.
	\item Let $C$ be an indegree-$1$ cluster, where $(u,v)$ is the outgoing edge and $(s,t)$ is the incoming edge. Now $f(C)$ is a table with four elements: the weight of the heaviest independent set in $C$ for all combinations of $u \in X$ vs.\ $u \notin X$ and $t \in X$ vs.\ $t \notin X$.
\end{itemize}
It is now easy to work out the details of the bottom-up and top-down phases. Note that the way we handle indegree-$0$ clusters is, in essence, identical to the classical centralized, sequential algorithm that solves MaxIS in trees (see e.g.\ \cite[Sect.\ 6.7]{Dasgupta2008}). The way we handle indegree-$1$ clusters can be seen as a special case of the centralized, sequential algorithm that solves MaxIS in bounded-treewidth graphs \cite{Bodlaender88}: we can summarize clusters with a constant number of interfaces to the rest of the graph, and we can merge such clusters.

\subsubsection{Beyond Dynamic Programming}

While we use the term \emph{dynamic programming} here to capture the problem family of interest, we would like to emphasize that there is a broad range of problems that are compatible with this framework even if one does not usually think that they have got anything to do with dynamic programming (recall \cref{tab:problem-examples}).

\subsection{Technicality: Very High Degrees}\label{ssec:intro-high-deg}

So far we have ignored one technical difficulty: what if our input tree has nodes of degree more than $n^\delta$. In such a case it is impossible to find small clusters, as the cluster that contains node $v$ will also contain all of its children.

Fortunately, for many problems such as MaxIS, we can easily modify the input and the problem slightly, so that we replace each node $v$ of degree more than $n^{\delta/2}$ with an $O(1)$-depth tree $T_v$. The new edges are equipped with additional labels so that we can handle them correctly in the dynamic programming algorithm and ensure that all nodes in $T_v$ make the same consistent choice.

We discuss this in more detail in \cref{ssec:high-degree,ssec:high-degree-dp}. To summarize, we can solve in any DP problem (\cref{def:dp_problems}), as long as we have degrees at most $n^{\delta/2}$ or we can reduce the degree as needed by replacing high-degree nodes with low-degree trees.

\subsection{Further Discussion on Related Work}

\paragraph{Bateni, Behnezhad, Derakhshan, Hajiaghayi, and Mirrokni.}

The prior work \cite{BateniBDHM18icalp,hajiaghayi18tree_dp} presents an MPC algorithm for dynamic programming in trees in $O(\log n)$ rounds in the MPC model. While the precise family of problems that they handle is phrased somewhat differently, the spirit is the same---they can also solve problems similar to the MaxIS problem.

Our work strictly improves on their work in two ways: our running time is $O(\log D)$, which is conditionally optimal, while their running time is $O(\log n)$, and our algorithm is deterministic, while their algorithm uses randomness.

In \cref{subsec:tree_median}, we also show how to solve a problem called \emph{tree median} using our framework. This is a problem engineered so that it does \emph{not} satisfy the property of \emph{binary adaptability}, which is a technical requirement used in \cite{BateniBDHM18icalp,hajiaghayi18tree_dp}. Informally, in binary adaptable problems one can replace high-degree nodes with binary trees, and hence it is sufficient to solve dynamic programming problems in bounded-degree trees; however, the tree median problem does not admit such a straightforward degree reduction. We hope this problem serves as a demonstration of the broad applicability of our framework, also beyond what was considered in prior work.

\paragraph{Balliu, Latypov, Maus, Olivetti, and Uitto.}

The prior work \cite{alkida22connectivity} presents an MPC algorithm for solving locally checkable labeling problems (LCLs) in trees in $O(\log D)$ rounds in the MPC model. Our running time is the same, but we solve a much broader family of problems (recall \cref{tab:problem-examples}).

We make use of many subroutines and ideas developed in \cite{alkida22connectivity}. For example, we make use of their algorithm for rooting a tree, and the idea of the hierarchical clustering as well as its key properties are due to them.

From the conceptual perspective, the key difference is that their work presents a single (arguably rather complicated) algorithm that intermixes the tasks of clustering the tree and constructing the solution for an LCL. The hierarchical clustering is rather implicit, and it has got properties that make it not directly applicable for solving a broad variety of problems: for example, arbitrarily long paths are compressed into one cluster, which will then no longer fit in the memory of one computer, and leaf nodes are aggressively eliminated, which is not compatible with all dynamic programming problems. In our algorithm the hierarchical clustering is built first, explicitly, and our clustering has got convenient properties that allow us to do per-cluster computations locally inside one computer, and it also allows us to tackle a broad range of problems.

\paragraph{Other Related Work.}

While our technique is conditionally optimal for the \emph{family} of dynamic programming problems, there are many problems that allow faster algorithms in certain cases.
For example, Balliu, Brandt, Fischer, Latypov, Maus, Olivetti, and Uitto \cite{Balliu2022-exp} consider classes of LCL problems that are local in nature, such as the MIS problem.
For many classes of natural problems, they give MPC algorithms that are much more efficient than $\Theta(\log D)$ for high diameter graphs.

Im, Moseley, and Sun \cite{im17stoc_dp} consider dynamic programming in the MPC model for problems that are not directly related to tree-structured inputs.

There is a related yet more powerful model called AMPC in which machines, in addition to the regular MPC operations,  can perform a sublinear number of (adaptive) queries to a distributed hash table per round. In the AMPC model, the problem of computing subtree sizes can be solved in $O(1)$ rounds \cite{MPC-via-remote-access}. 

In the classic PRAM model, problems of the same flavor have been studied already in the 1990s---for example, Gibbons, Cai, and Skillicorn \cite{GibbonsCS94} present an algorithm for upwards and downwards accumulation in trees that runs in $O(\log n)$ time. We emphasize that while $\Omega(\log n)$ is a natural lower bound for all such problems in the PRAM model, we can nevertheless achieve a running time of $O(\log D)$ in the MPC model.


\section{Preliminaries}\label{sec:preliminaries}

We make use of the following primitives: \emph{sorting} an array of $n$ elements and computing \emph{prefix sums} in an array of $n$ elements. Both of these operations can be solved in the MPC model with a deterministic algorithm in $O(1)$ rounds, see \cite{Goodrich99,GoodrichSZ11,CzumajDP21}.

\section{Input Representations}\label{sec:input_rep}

Our algorithm in \cref{sec:decomposition,sec:base} will assume that the input tree is rooted and it is given as a set of directed edges such that each edge goes from a child to its parent node. However, in addition to this standard representation there are various other ways to represent a tree using an array. In this section, we define other commonly used representations and show that we can transform the input from any of these representations to an array of directed edges in $O(1)$ round. In \cref{subsec:from_standard}, we will show how our algorithm framework makes it possible to turn the standard representation back to any of these representations.

\subsection{Definitions}\label{ssec:repr-def}
We consider tree-structured data represented in one of the following forms; we use the tree $T$ illustrated in \cref{fig:tree} as an example:
\begin{itemize}
    \item \textbf{List-of-edges:} This is the representation that our algorithm works with. Each element in the input array contains
    a pair of integers that represents a directed edge in a tree going from a child to its parent. Tree $T$ can be described as an array $[(1,4), (2,3), (5,4), (4,3)]$, if we use the labeling of the nodes given in \cref{fig:tree}.
    \item \textbf{String-of-parentheses:} In this representation, the tree is given as an array of properly nested parentheses or, equivalently, opening and closing tags. Each node in the tree is represented by two 
     parentheses ``('' and ``)''. We can interpret the array as a rooted tree in a bottom-up manner, with the leaf nodes represented as an empty pair of parentheses ``()''.
     The outermost pair of parentheses represents the root node. For example, $T$ can be represented as an array  $[{(},{(},{(},{)},{(},{)},{)},{(},{)},{)}]$.
    \item \textbf{BFS-traversal:} The array represents the BFS-traversal of the tree: the indices of the array denote the nodes in the tree in the BFS order, and an array element contains the index of the parent node. Tree $T$ can be represented as $[-,1,1,2,2]$.
    \item\textbf{DFS-traversal:} Similar to the above, the tree is given as an array that represents a DFS traversal of the tree. Tree $T$ can be represented as $[-,1,2,2,1]$.
    \item \textbf{Pointers-to-parents:} Similar to the above, but the nodes are ordered arbitrarily. Tree $T$ can be represented as $[4,3,-,3,4]$, if we order the nodes according to their labels in \cref{fig:tree}.
\end{itemize}

\begin{figure}
    \centering
      \includegraphics[page=7]{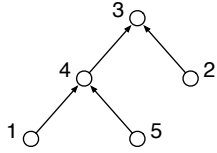}
      \caption{Tree $T$ used as an example in \cref{ssec:repr-def}.}
      \label{fig:tree}
\end{figure}

\subsection{Normalizing the Representation}

If the tree is originally given as a list of undirected edges, we can first root the tree at an arbitrary node and orient the edges in $O(\log D)$ rounds, using the algorithm from \cite{alkida22connectivity}.

BFS-traversal, DFS-traversal, and pointers-to-parents already represent the input as a set of directed edges in different manners, and hence it is easy to turn them into a list-of-edges representation. The nontrivial part
is to prove that we can obtain the \textbf{list-of-edges representation from string-of-parentheses} in $O(1)$ rounds in the MPC model.

We will first show how we can do this transformation for $\delta = 1/2$, i.e., assuming there are $m = \sqrt{n}$ computers each with $O(\sqrt{n})$ memory, and then we describe how to generalize the same strategy for any $\delta$.

Let $A$ be the array that contains properly nested parentheses. We assume that each opening parentheses ``('' in $A$ will represent a node in the tree.
Now for each open parenthesis, we need to find its parent open parenthesis.

Initially, $A$ is evenly distributed over $\sqrt{n}$ computers
$N_0,\ldots N_{m-1}$ such that $N_i$ contains the elements $A[i\sqrt{n}],\dotsc, A[(i+1)\sqrt{n} -1]$. Let $A[i]$ and $A[j]$ be 
two opening parentheses such that $i <j$. We know that $A[i]$ is the parent of $A[j]$ if all the parentheses from $A[i+1]$ to $A[j-1]$ are 
properly nested. If $A[i]$ is the parent of $A[j]$ and both of them are stored in the same computer, then the computer can easily identify
$A[i]$ as the parent of $A[j]$. The challenge is to identify the parent node if $A[i]$ and $A[j]$ are stored in different computers.

Notice that if $A[p]$ and $A[q]$ are a pair of opening and closing parentheses that denote the same node and both
are stored in some computer $N_i$ then $A[p]$ cannot be the parent of $A[k]$ if $A[k]$ is stored in some other computer. Thus, let us 
\emph{cancel out} properly nested pairs of parentheses stored in a single computer. Now the remaining parentheses inside each computer $N_i$, will be
nothing but a (possibly empty) sequence of closing parentheses followed by a (possibly empty) sequence of opening parentheses, for example, ``)))))(((''. Let $S_i$ be the array of remaining
parentheses in $N_i$.

Computer $N_i$ computes a pair $(c_i,o_i)$ where $c_i$ and $o_i$ is the number of closing and opening parentheses in $S_i$, and broadcasts it
to all the other computers. Using this information, for each node we can identify the array $S_i$ that contains its parent and also the index of 
the parent in $S_i$ as follows. For each open parentheses $A[j]$ stored at $N_i$, $N_i$ locally computes $l_j$ and $r_j$ that denote the number of closing and 
opening parentheses on the left and right side of $A[j]$, respectively, in $S_i$. Then $S[j]$ (stored in $N_a$) is the parent of $A[k]$ (stored in $N_b$) 
where $j<k$ and $a < b$, if $a$ is the largest integer such that \[r_j + \sum_{x=a+1}^b (o_x - c_x) - l_k = 0,\] which can be computed in $O(1)$ rounds by $N_b$.

To identify the index of the parent of a node in $A$, we need to do some more calculations. For each node $v$ we produce two tuples:
\begin{itemize}
    \item Type 1: $[i,j,1,v]$ denotes that node $v$ is stored at the $j$th index of $S_i$---this information is readily available for the computer that holds node $v$.
    \item Type 2: $[i,j,2,v]$ denotes that the \emph{parent} of node $v$ is stored at the $j$th index of $S_i$---this information can be computed as described above by the computer that holds node $v$.
\end{itemize}
This way we will have $n$ tuples in total in the system, and we can sort them in $O(1)$ rounds. Once sorted, in the array there will always be one tuple of type 1, representing a node $v$, followed by zero or more tuples of type 2, representing the children of $v$. This way we can identify all parent--child edges in $O(1)$ rounds.

\subsubsection{Low-memory Version}

Above, we assumed that we have got $\delta = 1/2$. Let us now see how the strategy generalizes to $\delta = 1/k$ for any $k$. In this case we will use a $k$-level strategy.

At level $\ell = 1, \dotsc, k$, we conceptually split the input in chunks of length $n^{\ell/k}$. Let $i$ be the parent of node $j$. We say that an edge $(i,j)$ is \emph{local} if $i$ and $j$ are in the same chunk, and otherwise it is global. We maintain the following invariant after level $\ell$:
\begin{itemize}
    \item We have already discovered all edges $(i,j)$ that are local.
    \item For each chunk we have computed a summary $(c, o)$ that denotes the number of closing and opening parentheses inside the chunk, after cancelling properly nested parentheses inside the chunk.
\end{itemize}
Assume $C$ is a chunk at level $\ell+1$ that consists of sub-chunks $C_1, \dotsc, C_c$ at level $\ell$; here by definition $c = n^{1/k}$. Now all computers that hold parts of chunk $C$ can learn the summaries $(c,o)$ for each sub-chunk $C_1, \dotsc, C_c$, as this information fits in their local memory. By following the same strategy as what we used in the case $\delta = 1/2$, we can now compute all local edges inside chunk $C$, as well as compute a summary $(c,o)$ for the entire chunk $C$. Hence, given the invariants at level $\ell$ we can in $O(1)$ time satisfy the invariants at level $\ell+1$.

If $\delta$ is not a convenient rational number $1/k$, we can round it down and let one computer with $O(n^\delta)$ memory play the role of many computers with $O(n^{1/k})$ memory each, and the above scheme applies.


\section{Hierarchical clustering} \label{sec:decomposition}
In this section we present an $O(\log D)$-round algorithm that computes the hierarchical clustering required for our dynamic programming algorithm (see \cref{sec:base}). Note that the clustering does not depend on the problem that we want to solve afterwards. 

\begin{figure*}
	\centering
	\includegraphics[page=3,scale=0.97]{figs.pdf}
	\caption{(a)~Creating indegree-zero clusters. (b)~Creating indegree-one clusters: we identify paths formed by degree-$2$ nodes in the subgraph induced by uncolored nodes and calculate their positions in the path both upwards and downwards.}\label{fig:create-clusters}
\end{figure*}

\subsection{Definitions}

We will now formalize the idea of hierarchical clustering that we introduced in \cref{ssec:intro-decomp}; see \cref{fig:decomp} for an illustration.

\begin{definition}[cluster]
	A cluster $C$ is a set such that each element is either a node $u_{i}$ or another cluster $C_{i}$. We recursively define the set of nodes that participate in $C$ as
  	\[
	V(C) = \bigcup_{C_{i} \in C} V(C_{i}) \cup \{u_{i} \mid u_{i} \in C\}.
	\]
	We require that the cluster $C$ contains at most $n^\delta$ elements, and the set of cut edges $(V(C), V \setminus V(C))\subseteq E$ has exactly one outgoing edge and at most one incoming edge.
\end{definition}

We classify clusters into two types based on the number of incoming edges: indegree-zero and indegree-one.

\begin{definition}[hierarchical clustering]\label{def:heirarchical-clustering}
	A \emph{hierarchical clustering} of a rooted tree $T=(V,E)$ is a collection of sets $S_{0},S_{1},\dots,S_{L}$ called layers such that $L=O(1)$ and the following are satisfied
	\begin{enumerate}
		\item each $S_i$ consists of nodes or clusters,
		\item $S_{0}=V$,
		\item For $i\ge 1$, (i) the nodes in $S_{i}$ are also nodes in $S_{i-1}$ and (ii) the clusters of $S_{i}$ form a partition of the remaining elements of $S_{i-1}$,
		\item $S_{L}$ contains one element which is a cluster.
	\end{enumerate}
\end{definition}
While it is easiest to grasp the clustering as a standalone \emph{graph-theoretic concept} in order to use it algorithmically, we need to assign cluster IDs and store certain pointers between a cluster and its nodes/clusters, etc.
More formally, we give each cluster $C \in S_{i}$ a unique cluster ID, and pointers to and from the clusters and nodes of $S_{i-1}$ that are contained in $C$. Since a cluster has exactly one outgoing and at most one incoming edge, we can contract each cluster in $S_{i}$ into a node, such that the resulting graph forms a tree $T_{i}$ where each edge corresponds to an edge of the original tree.

\subsection{Constructing the Clustering}

As discussed in \cref{sec:input_rep}, we can without loss of generality assume that the input is a rooted tree $T = (V, E)$ with $n$ nodes, represented as a list of edges. We will further assume that the maximum degree is $n^{\delta/2}$, but we will see how to overcome this limitation in \cref{ssec:high-degree}. By sorting the edges, we can also assume that each node and its incident edges are hosted on the same machine. Our goal is to construct a hierarchical clustering as in \cref{def:heirarchical-clustering}.

\subsubsection{High-Level Idea}
We will mostly follow the same ideas as what happens in the algorithm of \cite{alkida22connectivity}. However, there are two key differences that we will highlight in what follows, and we will also need to prove that the number of layers is still bounded by a constant.

We say that a subtree is a \emph{caterpillar} if it is a tree containing a central path and all other nodes are within distance $1$ from the path. We will alternate between two steps, for $O(1)$ iterations:
\begin{enumerate}
	\item Create indegree-zero clusters: we identify nodes $v$ such that we can replace the entire subtree $T(v)$ rooted at $v$ with a cluster.
	\item Create indegree-one clusters: we identify a disjoint set of caterpillars that we can replace with clusters.
\end{enumerate}

In \cite{alkida22connectivity}, they entirely removed what we call indegree-zero clusters, and then they only needed to contract long paths. Furthermore, they contracted arbitrarily long paths, while our clusters cannot be too large. Nevertheless, we can show that we make enough progress and we can finish after $O(1)$ pairs of such steps.

In our algorithm we will \emph{color} the nodes that correspond to indegree-zero clusters instead of removing them. Then we can largely follow the process and the analysis of \cite{alkida22connectivity} for the uncolored parts of the tree. As the colored nodes are always leaf nodes, and as each node can have at most $n^{\delta/2}$ neighbors, if we put into each cluster up to $n^{\delta/2}$ uncolored nodes, together with their colored neighbors the size of a cluster will be bounded by $n^{\delta}$, as needed.

\subsubsection{Creating Indegree-Zero Clusters}

Following \cite{alkida22connectivity}, we define that a node $v$ with more than $n^{\delta/2}$ uncolored nodes in its subtree $T(v)$ is called \emph{heavy}, and the rest of the nodes are \emph{light}.

We apply the following result from Lemma 6.13 of \cite{alkida22connectivity} to the uncolored subgraph (i.e., the subgraph induced by the uncolored nodes): there exists a deterministic optimal space $O(\log D)$-time MPC algorithm (CountSubtreeSizes) in which every node $v$ learns either the exact size of $T(v)$ or that $|T(v)| > n^{\delta/2}$.

With this information, we can identify each node $u$ such that $u$ is light but its parent $v$ is heavy. We apply Lemma 6.14 from \cite{alkida22connectivity}: there exists a deterministic optimal space $O(\log D)$-time MPC algorithm (GatherSubtrees) to collect $T(u)$ into the machine hosting $u$ for each such node $u$. Then, we replace $T(u)$ with an indegree-zero cluster, which is then represented as a colored node---see \cref{fig:create-clusters}. The overall running time is $O(\log D)$. The size of the cluster will be bounded by $n^{\delta}$, as there were only $n^{\delta/2}$ uncolored nodes, each with at most $n^{\delta/2}$ colored leaf nodes attached to it.

\subsubsection{Creating Indegree-One Clusters}

Now we are ready to describe the second step: creating indegree-one clusters. The idea is to identify long paths in the uncolored subgraph. A long path in the uncolored subgraph corresponds to a caterpillar if we also take into account the colored nodes.

We apply Lemma 6.17 from \cite{alkida22connectivity} to the uncolored subgraph: there exists a deterministic $O(\log D)$-time MPC algorithm (CountDistances) in which each degree-$2$ node knows its distance to both endpoints of the path formed by degree-$2$ nodes---see \cref{fig:create-clusters}.

Using the distances, we will split each path $P$ formed by degree-$2$ nodes in the uncolored subgraph into sub-paths of length at most $n^{\delta/2}$ (i.e., nodes with distance value $1, \dots, n^{\delta/2}$ form the first sub-path and so on). We call these sub-paths path fragment $P'$. We collect each fragment in a single machine and form a cluster $C$ by including also all colored nodes connected to $P'$. This will result in a caterpillar $C$, and as the maximum degree of the graph was $n^{\delta/2}$, the size of the cluster is at most $n^\delta$, as required. The overall running time of this step is $O(\log D)$.

\subsection{Number of Layers}

By construction, all clusters are sufficiently small. We still need to show that the number of layers is bounded by a constant:
\begin{lemma}\label{lem:decomp-layers}
	The number of layers in the hierarchical clustering we created is $O(1)$.
\end{lemma}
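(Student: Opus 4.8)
The plan is to show that each pair of steps (creating indegree-zero clusters, then creating indegree-one clusters) shrinks the tree substantially, so that after $O(1)$ such pairs only one cluster remains. The natural progress measure is some notion of size of the uncolored tree $T_i$ at layer $i$: I would track, say, the number of nodes of $T_i$, or more precisely a quantity that captures how much of the tree is ``resolved'' into clusters.

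First I would analyze one iteration. After the indegree-zero step, every node that remains uncolored and is still present as a branching node has at least one heavy child (otherwise its whole subtree would have been collected into an indegree-zero cluster); more usefully, every maximal uncolored subtree hanging below a branching node contains more than $n^{\delta/2}$ uncolored descendants, since light subtrees below heavy parents get absorbed. Then in the indegree-one step, the long paths of uncolored degree-$2$ nodes get chopped into fragments of length $n^{\delta/2}$ and contracted. The key combinatorial claim, which is exactly the structural lemma behind \cite{alkida22connectivity}, is that after one pair of steps the number of uncolored nodes drops from $N$ to $O(N / n^{\delta/2})$ (equivalently, $O(N^{1 - c})$ for a constant $c = c(\delta)$): every surviving uncolored node is a branching node whose uncolored subtree had $> n^{\delta/2}$ nodes, and a rooted tree can have at most $N / n^{\delta/2}$ vertices with that property, while the long degree-$2$ paths between them collapse by a factor of $n^{\delta/2}$. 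I would make this precise by a charging argument: charge each surviving branching node to its $\Theta(n^{\delta/2})$ uncolored descendants in the fragment of the tree that ``belongs'' to it, and charge each surviving path node to the fragment of $n^{\delta/2}$ path nodes it was grouped with.

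Given this per-iteration shrinkage $N \mapsto O(N^{1-c})$, iterating a constant number of times—namely $O(\log_{1/(1-c)} \log_{n} N) = O(1)$ times once $N \le n$, using that $n^{(1-c)^t} \le 2$ for $t = O(\log(1/\delta))$ which is a constant—reduces the uncolored tree to a single node, at which point one final indegree-zero step contracts everything (including all colored leaves) into the single top cluster $S_L$. Since $\delta$ is a fixed constant, the number of iterations, and hence $L$, is $O(1)$.

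The step I expect to be the main obstacle is the charging argument establishing the $N \mapsto O(N^{1-c})$ bound, and in particular verifying that it still goes through in the presence of colored nodes. In \cite{alkida22connectivity} colored nodes are simply deleted, whereas here they persist as leaves attached to clusters; I need to confirm that (i) the relevant ``heavy/light'' decomposition and path-contraction arguments only ever reference the uncolored subgraph, so they transfer verbatim, and (ii) adding colored leaves back never inflates a cluster beyond $n^\delta$ nodes, which is already argued in the construction (each cluster has $\le n^{\delta/2}$ uncolored nodes, each with $\le n^{\delta/2}$ colored neighbors). The remaining bookkeeping—that contracting clusters into nodes yields a genuine tree $T_i$ at each layer and that the degree bound $n^{\delta/2}$ is preserved across layers (colored nodes are leaves, so they do not create new high-degree nodes)—is routine and I would only sketch it.
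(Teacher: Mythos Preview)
Your overall strategy---track the number $N$ of uncolored nodes and show it drops by a factor of roughly $n^{\delta/2}$ per iteration via a charging argument---is correct and is in fact more direct than the paper's route. The paper instead compares three auxiliary processes $\Pi_1,\Pi_2,\Pi_3$: it first shows that the idealized process $\Pi_1$ (delete indegree-zero clusters, contract each maximal path to a single edge) shrinks the tree by a factor $\Omega(n^{\delta/2})$ via a leaf-counting argument, and then argues that $2/\delta$ iterations of the real process $\Pi_3$ accomplish at least as much as one iteration of $\Pi_1$. Your single-iteration charging (leaves of the heavy subgraph have disjoint subtrees of size $>n^{\delta/2}$; branching nodes are at most as many as leaves; path fragments contribute $O(N/n^{\delta/2})$ nodes) gets the same bound without the detour.

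There is, however, a genuine slip in your iteration count. The shrinkage $N \mapsto O(N/n^{\delta/2})$ is \emph{not} equivalent to $N \mapsto O(N^{1-c})$: the first divides by the \emph{fixed} quantity $n^{\delta/2}$, giving $N_t \le O\bigl(n/n^{t\delta/2}\bigr)$, so $t=\lceil 2/\delta\rceil=O(1)$ iterations suffice; the second, which is what you actually iterate in the last paragraph, gives $N_t \le n^{(1-c)^t}$, and forcing this below $2$ requires $t=\Theta(\log\log n)$, not $O(\log(1/\delta))$ as you assert. As written, then, your argument does not establish $L=O(1)$. The fix is trivial---drop the ``equivalently $O(N^{1-c})$'' reformulation and iterate the multiplicative bound directly. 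A smaller imprecision: not every surviving uncolored node is a branching node (leaves of the heavy subgraph and the contracted path-fragment nodes also survive uncolored), but the three-way charging you sketch covers all of them and does yield the $O(N/n^{\delta/2})$ bound.
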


To prove \cref{lem:decomp-layers}, consider first an alternative process $\Pi_1$ where we delete indegree-zero clusters instead of marking them colored, and in which we replace arbitrarily long paths with one edge, similar to \cite{alkida22connectivity}. We can show:
\begin{lemma}
	Each iteration of process $\Pi_1$ makes the tree smaller by a factor of $\Omega(n^{\delta/2})$.
\end{lemma}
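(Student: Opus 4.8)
The plan is to analyze the two substeps of one iteration of $\Pi_1$ separately. Write $m$ for the number of nodes of the tree at the start of the iteration (so $m \le n$), and note that the interesting regime is $m > n^{\delta/2}$, since otherwise the whole tree already fits on a single machine and the process stops. Substep (a) deletes the subtree $T(u)$ of every light node $u$ whose parent is heavy, where a node is heavy iff its current subtree has more than $n^{\delta/2}$ nodes; substep (b) then contracts every maximal path of degree-$2$ nodes into a single edge. Since substep (b) only removes nodes, it suffices to bound the number of nodes surviving substep (b), which in turn reduces to bounding the number of leaves of the tree obtained after substep (a).

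First I would prove the structural claim that every node surviving substep (a) is heavy. Heaviness is monotone along any root-to-leaf path — if $v$ is heavy then so is its parent, whose subtree contains $T(v)$ — and the root is heavy because $m > n^{\delta/2}$. Hence if a surviving node $w$ were light, then walking up from $w$ we reach a first heavy ancestor $v$, whose child $u$ on the path down to $w$ is light with a heavy parent; but then $T(u)$, which contains $w$, is deleted in substep (a), contradicting the survival of $w$.

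Next I would bound the number of leaves of the post-(a) tree. Let $\ell_1,\ell_2$ be two distinct such leaves. They cannot be in ancestor–descendant relation in the pre-iteration tree: if $\ell_2$ were a proper descendant of $\ell_1$, the child $w$ of $\ell_1$ on the path to $\ell_2$ would also survive substep (a) (otherwise $\ell_2 \in T(w) \subseteq T(u)$ for some deleted $u$, so $\ell_2$ would be deleted too), making $\ell_1$ a non-leaf — a contradiction. Therefore the subtrees $T(\ell_1),T(\ell_2),\dots$ over all post-(a) leaves are pairwise disjoint, and since each such leaf is heavy, each of these subtrees has more than $n^{\delta/2}$ nodes; disjointness gives at most $m/n^{\delta/2}$ leaves. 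After substep (b) the tree has no internal degree-$2$ node, so by the standard fact that a tree with $L$ leaves has fewer than $L$ nodes of degree at least $3$ (plus an $O(1)$ correction for the root), the resulting tree has $O(m/n^{\delta/2})$ nodes in total. This is exactly the claimed shrinkage by a factor $\Omega(n^{\delta/2})$, and since the argument used only the invariant ``the current tree has more than $n^{\delta/2}$ nodes'', it applies verbatim to every iteration.

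The main obstacle — essentially the only subtle point — is the pair of structural facts above: that survivors are heavy, and that post-(a) leaves are pairwise incomparable. Once these are in place, the leaf count and the degree-$2$-free bound are routine. A secondary point to handle carefully is the bookkeeping of which $n$ the heaviness threshold refers to across iterations (it stays the global $n^{\delta/2}$), together with the root edge cases when contracting degree-$2$ paths; neither affects the asymptotics.
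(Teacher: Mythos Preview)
Your proposal is correct and follows essentially the same route as the paper's proof: both arguments hinge on (i) surviving leaves being heavy, (ii) disjointness of their original subtrees, and (iii) the fact that a tree with no degree-$2$ internal nodes has $\Theta(L)$ nodes where $L$ is the leaf count. The only difference is presentational: you bound the leaf count after substep~(a) and then pass through substep~(b), whereas the paper works backwards from the post-iteration tree $T_1$, lower-bounds its leaf count by $n_1/4$, and multiplies by $n^{\delta/2}$; your write-up is somewhat more careful in spelling out why all survivors are heavy and why distinct post-(a) leaves are incomparable, points the paper leaves implicit.
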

\begin{proof} 
	Say we start with a tree $T_{0}$ with $n_{0}$ nodes.
	Let there be $n_{1}$ nodes in the tree $T_{1}$ obtained after we delete the indegree-zero clusters and replace all paths with a single indegree-one cluster. This means that all paths are of length at most 1. Consider a tree $T_{1}'$, which is $T_{1}$ except all paths are replaces with an edge. Notice that $|T_{1}'| \ge n_{1}/2$, and $T_{1}$ has the same number of leaves as $T_{1}'$. Now, in $T_{1}'$ there are no nodes with degree $2$. And since any tree has at least as many leaves as nodes of degree $3$ or more, $T_{1}'$ has at least $|T_{1}'|/2$ leaves, which means that there are at least $n_{1}/4$ leaves in $T_{1}$.

	Consider a leaf node $v$. Since $v$ was not removed, it must have been heavy, and hence the subtree rooted at $v$ has size $ > n^{\delta/2}$. Hence, the number of nodes before we started our process clustered was $n_{0} \ge (n_{1}/4) \cdot n^{\delta/2}$.
	Therefore, the number of nodes in each clustering step falls by a factor of $n^{\delta/2}$.
\end{proof}

Then slightly modify the process; let $\Pi_2$ be a process in which we still delete indegree-zero clusters instead of marking them colored, but we replace arbitrarily long paths with one node and two edges.
\begin{lemma}
	Each iteration of process $\Pi_2$ makes the tree smaller by a factor of $\Omega(n^{\delta/2})$.
\end{lemma}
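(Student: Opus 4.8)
The plan is to piggyback on the lemma we have just proved for $\Pi_1$. Observe that $\Pi_1$ and $\Pi_2$ are identical except for the treatment of a contracted path: where $\Pi_1$ replaces a maximal path of degree-$2$ nodes by a single edge joining its two endpoints, $\Pi_2$ replaces it by a single new node together with the two edges joining that node to the two endpoints. Run both processes for one iteration on the same input tree $T_0$ with $n_0$ nodes, and let $T_1$ (with $n_1$ nodes) and $T_2$ (with $n_2$ nodes) be the respective outputs. Since both processes delete the same indegree-zero clusters and contract the same set of maximal degree-$2$ paths, $T_2$ is obtained from $T_1$ by \emph{subdividing once} precisely those edges of $T_1$ that were created by a contraction. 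Subdividing a set of edges of an $n_1$-node tree once adds at most one node per edge, i.e.\ at most $n_1-1$ nodes, so
\[
  n_2 \le n_1 + (n_1-1) < 2\,n_1 .
\]

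It then remains only to plug in the bound from the $\Pi_1$ lemma. Its proof established $n_0 \ge (n_1/4)\cdot n^{\delta/2}$, equivalently $n_1 \le 4\,n_0/n^{\delta/2}$. Combining this with the inequality above gives $n_2 < 8\,n_0/n^{\delta/2}$, i.e.\ one iteration of $\Pi_2$ shrinks the tree by a factor of $\Omega(n^{\delta/2})$, which is exactly what we want.

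The only place that needs a moment of care is the claim that $T_2$ really is a one-subdivision of $T_1$; concretely, that the map sending a contracted maximal degree-$2$ path of $T'$ (the common tree obtained after deleting the indegree-zero clusters) to the corresponding edge of $T_1$ is well defined and injective. This holds because each such path contains at least one degree-$2$ node, its two endpoints are distinct nodes that are not of degree $2$ (branch nodes, leaves, or the root), and in a tree two nodes span a unique path; the parts of the tree not touched by a contraction are identical in $T_1$ and $T_2$. Boundary effects at the root and its virtual outgoing edge involve only $O(1)$ nodes and leave the asymptotics unchanged. I expect this bit of bookkeeping---pinning down exactly which edges are subdivided and dispatching the handful of boundary cases---to be the only mildly delicate point; everything else is immediate once the $\Pi_1$ lemma is in hand.
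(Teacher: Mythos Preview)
Your proposal is correct and follows essentially the same approach as the paper: observe that one iteration of $\Pi_2$ is the same as one iteration of $\Pi_1$ followed by subdividing (a subset of) the edges once, so the node count at most doubles, and then invoke the $\Pi_1$ lemma. The paper states this in two sentences without the extra bookkeeping you outline, but the argument is the same.
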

\begin{proof}
	In essence, $\Pi_2$ behaves as if we first performed one iteration of $\Pi_1$ and then subdivided some edges. The subdivision only increases the number of nodes by a factor of two.
\end{proof}

Finally, let $\Pi_3$ be a process in which we still delete indegree-zero clusters instead of marking them colored, but we replace long paths with a sequence of clusters, each with at most $n^{\delta/2}$, similar to our real process. We can show:
\begin{lemma}\label{lem:pi-3}
	$O(1)$ iterations of process $\Pi_3$ makes the tree smaller by a factor of $\Omega(n^{\delta/2})$.
\end{lemma}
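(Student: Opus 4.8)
The plan is to show directly that a \emph{single} iteration of $\Pi_3$ already shrinks the tree by a factor $\Omega(n^{\delta/2})$; since running more iterations can only shrink the tree further, the stated bound for $O(1)$ iterations follows immediately. Concretely, start from a tree $T_0$ with $n_0$ nodes and apply one iteration of $\Pi_3$: first delete the indegree-zero clusters, obtaining the pruned tree $P$ consisting exactly of the heavy nodes (those $v$ with $|T(v)| > n^{\delta/2}$ in $T_0$), and then contract each maximal path of degree-$2$ nodes of $P$ of length $\ell$ into $\lceil \ell / n^{\delta/2} \rceil$ cluster-nodes. I would bound the size of the resulting tree by accounting separately for (i)~the leaves of $P$, (ii)~the nodes of degree $\geq 3$ of $P$, and (iii)~the cluster-nodes created by the contractions.

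For (i), I would reuse the argument from the proof for $\Pi_1$: every leaf $v$ of $P$ survived pruning because it was heavy, so $|T(v)| > n^{\delta/2}$; moreover the subtrees $T(v)$ over the leaves $v$ of $P$ are pairwise disjoint (if one such leaf were an ancestor of another, then it would have a heavy child and hence would not be a leaf of $P$), so there are at most $n_0 / n^{\delta/2}$ leaves. For (ii), since any tree has at most as many nodes of degree $\geq 3$ as leaves, $P$ has only $O(n_0 / n^{\delta/2})$ nodes of degree $\neq 2$; contracting every maximal degree-$2$ path of $P$ to a single edge gives a tree with the same leaves and no degree-$2$ node, hence with $O(n_0 / n^{\delta/2})$ nodes and edges, so in particular $P$ has only $O(n_0 / n^{\delta/2})$ maximal degree-$2$ paths. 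For (iii), the maximal degree-$2$ paths of $P$ are vertex-disjoint, so if $\ell_1, \ell_2, \dots$ are their lengths then $\sum_i \ell_i \leq |P| \leq n_0$, and therefore the number of cluster-nodes produced is $\sum_i \lceil \ell_i / n^{\delta/2} \rceil \leq (\text{number of paths}) + \sum_i \ell_i / n^{\delta/2} = O(n_0 / n^{\delta/2})$. Summing (i)--(iii) shows the tree has $O(n_0 / n^{\delta/2})$ nodes after one iteration, which is the claim.

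I expect the only delicate point to be step (i): one must be careful that, after deleting indegree-zero clusters, the surviving tree is precisely the subtree spanned by the heavy nodes, so that the heavy-leaf / disjoint-subtrees counting from the $\Pi_1$ analysis applies verbatim; everything else is bookkeeping with ceilings. If one instead wanted to stay within the three-process template and derive the lemma from the bound already proved for $\Pi_2$, the obstacle would shift to controlling the interleaving of re-pruning and path-contraction over several iterations: there, I would observe that re-pruning is monotone (it only removes nodes and only shrinks the skeleton), and that each iteration of $\Pi_3$ replaces a path of length $\ell$ by one of length $\lceil \ell / n^{\delta/2} \rceil$, so after $\lceil 2/\delta \rceil = O(1)$ iterations every maximal path --- of length at most $n$ --- has collapsed to length $O(1)$; at that point the tree is, up to a constant factor in the number of nodes, a subtree of the output of one iteration of $\Pi_2$, and the $\Pi_2$ bound finishes the argument.
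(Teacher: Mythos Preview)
Your primary argument is correct and in fact proves something sharper than the paper does: you show that \emph{one} iteration of $\Pi_3$ already achieves the $\Omega(n^{\delta/2})$ shrinkage, by directly counting leaves, branching nodes, and path-fragments in the heavy skeleton~$P$ via the ceiling bound $\sum_i \lceil \ell_i / n^{\delta/2}\rceil \le (\text{\#paths}) + \sum_i \ell_i / n^{\delta/2}$. The paper instead takes exactly the alternative you sketch at the end: it observes that after $2/\delta$ iterations of $\Pi_3$ every maximal degree-$2$ path has collapsed to a single node, so $2/\delta$ iterations dominate one iteration of $\Pi_2$, and then invokes the $\Pi_2$ lemma. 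Your route is self-contained and gives a tighter constant; the paper's route is a two-line reduction that reuses the $\Pi_2$ bound and avoids redoing any counting. Both are valid; the paper simply optimizes for brevity within the $\Pi_1\!\to\!\Pi_2\!\to\!\Pi_3$ scaffolding it has already set up.
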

\begin{proof}
	If we iterate $\Pi_3$ for more than $2/\delta$ iterations, each path gets contracted into a path with only one node. Hence, $2/\delta$ iterations of $\Pi_3$ makes at least as much progress as one iteration of $\Pi_2$.
\end{proof}

\cref{lem:decomp-layers} now follows by observing that $\Pi_3$ describes accurately what happens in the uncolored subgraph in our real process:

\begin{proof}[Proof of \cref{lem:decomp-layers}]
By applying \cref{lem:pi-3} iteratively for $O(1)$ times to the uncolored subgraph, we can see that the uncolored part gets contracted into one node, and at that point the entire graph will fit in one indegree-zero cluster.
\end{proof}

\subsection{Handling High-Degree Nodes}\label{ssec:high-degree}

So far we have assumed that the tree that is given as input has degree at most $n^{\delta/2}$. The general solution to overcome this limitation is to replace high-degree nodes with $O(1)$-depth subtrees.

Let us now briefly describe how to implement it in $O(1)$ rounds in the MPC model. We can sort the original list of edges by the parent node identifier. Now whenever a single machine holds more than $n^{\delta/2}$ edges with the same parent $u$, it introduces new nodes whose parent is $u$ and these new nodes become the new parent of $n^{\delta/2}$ children of $u$. We repeat this for $O(1)$ steps until all nodes have sufficiently low degrees. Throughout the process, we keep track of the \emph{type} of the edge: whether it is an \emph{original} edge or an \emph{auxiliary} edge created while splitting high-degree nodes---this information is needed then later when we solve the DP problem (see \cref{ssec:high-degree-dp}).

This process will increase the number of nodes and the diameter by only a constant factor. Hence, if we now apply the clustering algorithm, the running time is still $O(\log D)$ rounds, where $D$ is the diameter of the \emph{original} tree.


\section{Solving DP Problems}\label{sec:base}

Now we will show how we can use the hierarchical clustering computed in \cref{sec:decomposition} to solve dynamic programming problems (recall \cref{def:dp_problems}).

\subsection{From Bottom to Top}

Let $L = O(1)$ be the number of layers in the hierarchical clustering. We fill in the dynamic programming tables in $L$ iterations, by maintaining the following invariant:
\begin{definition}[bottom-up invariant]
	After iteration $i = 0, 1, \dotsc, L$, each cluster $C$ of layer $i$ is labeled with its dynamic programming table $f(C)$, and all other nodes are labeled with their original inputs.
\end{definition}
This invariant is trivial to satisfy in the beginning, as layer $0$ is our input tree and there are no clusters yet.

Now assume that we satisfy the invariant before iteration $i > 0$. Now each node that still participates in the computation knows both its cluster identifier for layer $i$ and either its input or its dynamic programming table. Furthermore, this information fits by assumption in $O(1)$ words. We can now sort the array of cluster identifiers and node labels and this way ensure that data related to one cluster is stored consecutively. Now one cluster spans at most two machines; with one additional routing step we can ensure that each cluster is fully contained inside one machine.

Now we can locally summarize each cluster $C$, by applying the sequential algorithm that we assumed exists. Finally, we have a summary $f(C)$ for each cluster. We can then apply sorting again to move the summary $f(C)$ back to the array location that we use to store information for cluster $C$. In essence, this enables us to solve the operation illustrated in \cref{fig:bottom-to-top} for each cluster in parallel.

Eventually, we have computed the dynamic programming tables for all clusters at all layers.

\subsection{From Top to Bottom}

Now we proceed to solve the problem, i.e., to fill in the labels of the edges. We proceed through the layers now in the reverse order, maintaining the following invariant:
\begin{definition}[top-down invariant]
	After iteration $i = L, L-1, \dotsc, 0$, we have computed the labels of all edges $(u,v)$ in the tree that corresponds to layer $i$, and this information is stored together with node $u$.
\end{definition}
This invariant can be satisfied for $i = L$: there is only one edge in the tree, the outgoing edge of the topmost cluster $C$, and by assumption given $f(C)$ we can label this edge.

Now assume we satisfy the invariant before iteration $i < L$. Now if $C$ is a cluster that appears in layer $i$, we can use sorting to ensure that the $C$ is aware of both the label of its outgoing edge and the label of its incoming edge (if any). Then we again to reorganize data so that the nodes of layer $i-1$ that form a cluster $C$ at layer $i$ are stored in the same computer. We can apply the sequential algorithm to now label all internal edges of $C$. In essence, this enables us to solve the operation illustrated in \cref{fig:top-to-bottom} for each cluster in parallel.

Eventually, we have computed the labels of all edges in layer $0$, i.e., solved the original problem.

\subsection{Handling High-Degree Nodes}\label{ssec:high-degree-dp}

In \cref{ssec:high-degree} we replaced high-degree nodes with $O(1)$-depth subtrees; we will have both \emph{original} and \emph{auxiliary} edges in the tree. In general, this will result in a new DP problem, with possibly different rules for different edges. For our running example, MaxIS, the rules can be specified as follows:
\begin{itemize}
	\item Original edge $(u,v)$: if we have $u \in X$, we must have $v \notin X$, and vice versa.
	\item Auxiliary edge $(u,v)$: if we have $u \in X$, we must have $v \in X$, and vice versa.
\end{itemize}
\noindent In essence, this ensures that all new nodes that represent one original node make the same consistent choice. A similar strategy works for a wide range of graph problems.


\section{Further Applications}\label{sec:further}
In this section we discuss further applications of our framework. We start by discussing the challenge of processing high-degree nodes in problems that are not as simple as the MaxIS problem.

\subsection{The Tree Median Problem} \label{subsec:tree_median}

In the \emph{tree median} problem, the input is a rooted tree, where each leaf node has a number associated with it. The task is defined recursively: the label of a node has to be the median of the labels of its children. 

For nodes with even number of children, we require it to output the smaller of the two medians. This allows us to assume w.l.o.g. that all nodes have an odd number of children as those with even number of children can add a dummy leaf child with value $-\infty$.

This problem admits a simple sequential strategy, in which we label the nodes starting from the leaf nodes. However, as mentioned earlier, the tree median problem does not belong to the class of problems considered in the prior work by \cite{BateniBDHM18icalp,hajiaghayi18tree_dp}. In particular, tree median is not \emph{binary adaptable} as we cannot replace a high degree node with a binary tree (without using significantly larger dynamic programming tables and hence more memory). Nevertheless, in this section, we show that this problem can be solved by our algorithm in $O(\log D)$ rounds (in optimal space).

\subsubsection{Handling High Degree Nodes}
We replace the children of each high-degree node $u$ with an $O(1)$-diameter tree, as described in \cref{ssec:high-degree}. Recall that the original children of $u$ are leaves in this tree, and each internal node has degree at most $n^{\delta/2}$. Since we do not care about the output computed at these newly created nodes, we will call them \emph{don't-care} nodes. The don't-care nodes will hold some intermediate values that help $u$ compute the correct median. Throughout the process, we also remember the original parent of each node.

\subsubsection{Indegree-Zero Clusters}
An indegree-zero cluster at layer $i$ lies in a single machine. Therefore, the medians for all the nodes in the cluster can be locally computed from the medians computed for the clusters at layer at most $i-1$. If this cluster contains high degree nodes, we can compute the median for all such nodes in parallel by sorting all nodes by original parent identifier and median value, and picking the median of the children of these high degree nodes.

\subsubsection{Indegree-One Clusters}
An indegree-one cluster at layer $i$ consists of a unique directed path $P$ of $\ell$ nodes $p_{j}$ $(0 \le j < \ell)$, such that the incoming edge to the cluster is incoming to $p_{\ell-1}$ and the outgoing edge from the cluster is outgoing from $p_{0}$. For $j>0$, each node $p_{j} \in P$ can have an arbitrary number of incoming edges from other nodes and lower layer clusters one child $p_{j-1} \in P$. The output of $p_{j}$ is the variable $x_{j}$, initially unknown.

\begin{lemma}\label{lemma:path-reduce-leaves}
  For each node $p_{j-1}$, it is sufficient to store two values $a_{j}$, $b_{j}$ such that $x_{j-1} = \median(x_{j}, a_{j}, b_{j})$ and $a_{j} \le b_{j}$.
\end{lemma}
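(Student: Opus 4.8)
The plan is to peel off, among the children of $p_{j-1}$, the single child whose output is still unknown at this stage of the algorithm, and absorb all the other children into two precomputed numbers. By definition of tree median, $x_{j-1}$ equals the median of the outputs of the children of $p_{j-1}$, and exactly one of those children has an output that is not yet fixed: the on-path child $p_j$ (in the one case where $p_{j-1}$ is the endpoint of $P$ incident to the cluster's incoming edge, this distinguished child is the external tail of that edge, whose output is by definition the label of the incoming edge; writing $x_\ell$ for that label lets this case be treated identically). Every other child of $p_{j-1}$ lies inside the cluster; by the construction in \cref{sec:decomposition} these are indegree-zero clusters or plain leaves, and since indegree-zero clusters have no incoming edge, the bottom-up pass has already pinned their outputs to concrete values by the time we process this layer. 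Let $K$ be the multiset of those concrete outputs, so $x_{j-1} = \median(\{x_j\}\cup K)$. The whole statement then reduces to the arithmetic fact that $\median(\{y\}\cup K) = \median(y, a, b)$ for a suitable pair $a \le b$ depending only on $K$ (and, composing these identities along $P$, the entire cluster is summarized by one such pair, which is what $f(C)$ stores).

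First I would fix the parity. We may assume every node has an odd number of children, and $p_{j-1}$ has exactly one distinguished child, so $|K|$ is even, say $|K| = 2m$; ordering its elements $k_1 \le \dots \le k_{2m}$, I set $a_j := k_m$ and $b_j := k_{m+1}$, which gives $a_j \le b_j$. Then I would verify $\median(\{y\}\cup K) = \median(y, a_j, b_j)$ by a three-way case split on $y$: if $y \le a_j$, then among the $2m+1$ values $\{y\}\cup K$ at least $m+1$ are $\le a_j$ (namely $y, k_1, \dots, k_m$) and at least $m+1$ are $\ge a_j$ (namely $k_m, \dots, k_{2m}$), forcing the middle value to be $a_j$; if $y \ge b_j$ the symmetric count forces it to be $b_j$; and if $a_j \le y \le b_j$ then at least $m+1$ values are $\le y$ and at least $m+1$ are $\ge y$, forcing the middle value to be $y$. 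The general principle making the counting robust to ties is: if among $2m+1$ reals at least $m+1$ are $\le w$ and at least $m+1$ are $\ge w$, then the $(m+1)$-th smallest is $w$. The three cases say precisely that $\median(\{y\}\cup K)$ is $y$ clamped to $[a_j, b_j]$, which equals $\median(y, a_j, b_j)$ because $a_j \le b_j$; taking $y = x_j$ gives the lemma.

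It remains to note the resource bounds, which are immediate: $a_j$ and $b_j$ are two numbers, so the summary attached to $p_{j-1}$ is $O(1)$ words, and during the bottom-up phase the whole cluster, hence all children of $p_{j-1}$ with their already-determined outputs, sits on a single machine, so $(a_j, b_j)$ is obtained by one selection over those outputs using $O(|C|)$ additional space. The only bookkeeping subtlety is the case where $p_{j-1}$ is itself a cluster formed in an earlier layer rather than a genuine node: there the pair we want is already exactly $f(p_{j-1})$ by the inductive construction, so nothing needs to be recomputed. I expect the three-case median identity, and specifically getting the tie handling in the counting argument right, to be the only real technical content of the proof; the rest is bookkeeping against the bottom-up invariant.
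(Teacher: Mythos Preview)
Your proposal is correct and follows essentially the same idea as the paper: among the already-determined children of $p_{j-1}$ (an even number, since the total is odd and exactly one child is the on-path node $p_j$), keep only the two middle values as $a_j, b_j$. The paper's proof is a one-liner that simply asserts ``delete all but the middle two and the median is still preserved,'' whereas you supply the explicit three-way case split on the position of $x_j$ relative to $[a_j,b_j]$ that actually verifies this; your added bookkeeping about endpoints, resource bounds, and the case where $p_{j-1}$ is itself a prior-layer cluster goes somewhat beyond what the lemma itself requires but is all consistent with the surrounding text.
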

\begin{proof}
  If $p_{j-1}$ has more than two leaf children, it can delete all but the middle two and the median is still preserved. This can be done in parallel for all nodes in the tree by sorting by parent identifier and value.
\end{proof}

We will show how to replace a sub-path of length $2$ by a single edge. This process can be repeated in parallel by all nodes to compress the path $P$ in $O(1)$ rounds.

The situation is as follows: we have a path $p_{2} \to p_{1} \to p_{0}$, with $x_{0} =  \median(x_{1}, a_{1}, b_{1})$ and $x_{1} =  \median(x_{2}, a_{2}, b_{2})$. We wish to write $x_{0} =  \median(x_{2}, a, b)$. We can do it according to the following rule:
$$
(a, b) = \begin{cases}
  (a_{1}, a_{1}) & \text{if } b_{2} \le a_{1} \\
  (b_{1}, b_{1}) & \text{if } b_{1} \le a_{2} \\
  (\max\{a_{1}, a_{2}\}, \min\{b_{1}, b_{2}\}) & \text{otherwise}
\end{cases}
$$

\begin{lemma}
  Value $x_{0}$ is the correct median we wish to compute.
\end{lemma}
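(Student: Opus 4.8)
The goal is to verify that the three-case rule for $(a,b)$ reproduces $x_0 = \median(x_1, a_1, b_1)$ where $x_1 = \median(x_2, a_2, b_2)$, i.e. that $\median(\median(x_2,a_2,b_2), a_1, b_1) = \median(x_2, a, b)$ for \emph{every} value of $x_2$. The plan is to treat this as a pure statement about the median of three reals and argue by case analysis on the position of $x_1$ relative to $a_1, b_1$, combined with the fact that $x_1$, as a function of $x_2$, is monotone nondecreasing and takes values only in the range $[\min, \max]$ prescribed by the clamp $\median(x_2,a_2,b_2)$.

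First I would record the elementary fact that for $a \le b$, the map $x \mapsto \median(x,a,b)$ is exactly the \emph{clamp} of $x$ to the interval $[a,b]$: it equals $a$ if $x \le a$, equals $x$ if $a \le x \le b$, and equals $b$ if $x \ge b$; in particular it is nondecreasing in $x$. Composing two clamps, $x_0$ is a nondecreasing function of $x_2$, piecewise of the form ``constant, then identity, then constant'', so it is again a clamp of $x_2$ to some interval $[a^\*, b^\*]$ (possibly degenerate). It therefore suffices to show that the rule outputs exactly the endpoints of that interval, i.e. $a = a^\*$ and $b = b^\*$, where $a^\* = \inf_{x_2} x_0$ and $b^\* = \sup_{x_2} x_0$. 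Since $x_1$ ranges over $[a_2, b_2]$ as $x_2$ ranges over $\mathbb{R}$, we have $a^\* = \median(a_2, a_1, b_1)$ and $b^\* = \median(b_2, a_1, b_1)$ — each itself a clamp of $a_2$ (resp.\ $b_2$) to $[a_1, b_1]$.

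Now I would split on the three cases of the rule. In the case $b_2 \le a_1$: then $a_2 \le b_2 \le a_1$, so clamping either $a_2$ or $b_2$ into $[a_1,b_1]$ gives $a_1$; hence $a^\* = b^\* = a_1$, matching $(a,b) = (a_1,a_1)$, and the composed function is the constant $a_1$ as it should be. In the case $b_1 \le a_2$: then $b_1 \le a_2 \le b_2$, so clamping both $a_2$ and $b_2$ into $[a_1,b_1]$ gives $b_1$; hence $a^\* = b^\* = b_1$, matching $(b_1,b_1)$. In the remaining ``otherwise'' case we have $a_1 < b_2$ and $a_2 < b_1$; here clamping $a_2$ to $[a_1,b_1]$ yields $\max\{a_1, a_2\}$ (it is already $\le b_1$ since $a_2 < b_1$ and $a_1 \le b_1$), and clamping $b_2$ to $[a_1,b_1]$ yields $\min\{b_1, b_2\}$ (it is already $\ge a_1$ since $b_2 > a_1$ and $b_1 \ge a_1$); and one checks $\max\{a_1,a_2\} \le \min\{b_1,b_2\}$ holds under the case hypotheses (the only nontrivial inequality, $a_2 \le b_2$, is given), so the interval is valid. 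This matches the third branch of the rule.

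The main obstacle — really the only thing requiring care — is making sure the three cases are exhaustive and mutually consistent on their overlaps, and that in each case the claimed endpoint formulas are the correct clamps rather than accidental coincidences; once the ``composition of clamps is a clamp'' observation is in place this reduces to checking a handful of sign inequalities among $a_1, b_1, a_2, b_2$. I would also double-check that when $x_0$ turns out to be genuinely constant (the first two cases) the degenerate output $a = b$ is consistent with the convention $a_j \le b_j$ assumed throughout, which it is. Finally I would note that this per-triple identity is all that is needed: applying it repeatedly, in parallel, along the path $P$ as described before the lemma compresses $P$ correctly, so the indegree-one cluster computation is valid.
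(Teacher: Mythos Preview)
Your proof is correct and follows essentially the same approach as the paper: both recognize that $\median(x,a,b)$ with $a\le b$ is the clamp of $x$ to $[a,b]$, and both verify the three cases of the rule by case analysis on the relative position of the intervals $[a_1,b_1]$ and $[a_2,b_2]$. Your explicit observation that a composition of clamps is again a clamp (so that it suffices to identify the two endpoints $a^\ast=\median(a_2,a_1,b_1)$ and $b^\ast=\median(b_2,a_1,b_1)$) is a slightly cleaner structural framing, but the underlying verification is the same as the paper's direct case check.
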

\begin{proof}
  Notice that $x_{0} = \median(x_{1}, a_{1}, b_{1}) \in [a_{1}, b_{1}]$, and $x_{1} = \median(x_{2}, a_{2}, b_{2}) \in [a_{2}, b_{2}]$. We now do a case analysis:

  If $b_{2}\le a_{1}$, it means that $x_{1} \le a_{1}$, and therefore $x_{0} = a_{1} = \median(x_{2}, a_{1}, a_{1})$.

  Similarly, if $b_{1}\le a_{2}$, it means that $x_{1} \ge b_{1}$, and therefore $x_{0} = b_{1} = \median(x_{2}, b_{1}, b_{1})$.

  The final case is that $[a_{1}, b_{1}]$ and $[a_{2},b_{2}]$ have an intersection which is the interval $[a, b]$, where
  \begin{align*}
  a &= \max\{a_{1}, a_{2}\} \\
  b &= \min\{b_{1}, b_{2}\}.
  \end{align*}
  It is easy to see that $x_{0} = x_{2}$ iff $x_{2} \in [a,b]$. And if $x_{2} \notin [a,b]$, then $x_{0} = a$ if $x_{2} < a$ and $x_{0} = b$ if $x_{2} >  b$. Hence $x_{0} = \median(x_{2}, a, b)$.
\end{proof}

If $P$ does not contain any (originally) high degree nodes, the entire cluster is locally processed by writing the value $x_{j}$ of all nodes $p_{j}\in P$ as $\median(x_{\ell}, a_{j}', b_{j}')$ where $a_{j}'$ and $b_{j}'$ are calculated by repeated application of the procedure mentioned for the length-2 path.

Now let $p_{j-1} \in P$ be a high degree node. We can write $p_{j-1} = \median(x_{j},a_{j},b_{j})$ just like in \Cref{lemma:path-reduce-leaves} by sorting by original parent identifiers. Here, $x_{j}$ is the value of the don't-care child $p_{j} \in P$, but we want $x_{j}$ to be the value of the original child of $p_{j-1}$ that belongs to $P$. In order to propagate the correct value, we have a different rule for the don't-care nodes in $P$ where they just copy the value of their child. Hence every high degree node in $P$ will also compute the correct median value.


\subsection{Bayesian Tree Inference and Belief Propagation}\label{subsec:inference}

Probabilistic or Bayesian graphical models are ubiquitous in machine learning and statistics \cite{Koller_book09}. A probabilistic graphical model is a graph, where the nodes (say, $x_i \in \mathbb{R}^{d_x}$) present hidden random variables with a conditional distribution structure defined by the vertices of the graph. We also get measurements of the graph (say, $y_i \in \mathbb{R}^{d_y}$) and an important problem of inference in graphical models is to compute the posterior distributions of the nodes, that is, $p(x_k \mid y_1,\ldots,n)$ for some selected $k = 1,\ldots,n$.

In this section, we consider an important special case of a Bayesian graphical model, where the graph is a tree and the observations are conditionally independent observations obtained at each node from a given conditional distribution model $p(y_i \mid x_i)$. The conditional distributions of the nodes then take the form $p(x_i \mid x_{\gamma_i})$, where $\gamma_i$ is the collection of child indices of the node $x_i$ (a leaf $j$ has $\gamma_j = \emptyset$). It now turns out that the algorithm framework presented in this paper allows us to compute $p(x_k \mid y_1,\ldots,n)$ in $O(\log D)$ MPC rounds, at least in the Gaussian special case which we consider here. We assume, without loss of generality, that $x_k$ corresponds to the root of the tree.

Let us denote the clique indices of the node $i$ as $\alpha_i = \{ i \} \cup \gamma_i$ and define clique potentials as
\[
  \psi_i(x_i,x_{\gamma_i}) = \psi_i(x_{\alpha_i}) = p(y_i \mid x_i) \, p(x_i \mid x_{\gamma_i}).
\]
The computation of posterior probability density $p(x_k \mid y_1,\ldots,n)$ then corresponds to computing the marginal of the product of the clique potentials:
\[
  p(x_k \mid y_1,\ldots,n) \propto \int \cdots \int \prod_{i=1}^n \psi_i(x_{\alpha_i}) \, d(x_{1:n \backslash k}).
\]
An efficient algorithm for solving this kind of problems on trees is called belief propagation \cite{Koller_book09}. In the case of path graphs (i.e., when each $\alpha_i$ is a pair of indices), the solution to the inference problem is given by Bayesian filters and smoothers \cite{Sarkka:2013}, and belief propagation corresponds to so-called two-filter smoother. Parallel algorithms for the Bayesian filtering and smoothing problems (i.e., inference for probabilistic path graphs) have recently been developed in \cite{Sarkka:2021,Hassan:2021}, but not in the context of the MPC model. However, the associative formulations used in those algorithms provide practical means for path compression that we also need in Bayesian trees.

If we now think that the present tree is actually the subtree within the current cluster, then we have the following two possible cases to consider:
\begin{enumerate}
\item {\em Indegree-zero cluster}, where we want to compute
\[
  \bar{\psi}_1(x_1) = \int \cdots \int \prod_{i=1}^n \psi_i(x_{\alpha_i}) \, d(x_{2:n}),
\]
where $x_1$ is the root. The potential $\bar{\psi}_1(x_1)$ then corresponds to compression of the indegree-zero cluster into a single node.

\item {\em Indegree-one cluster}, where we want to compute
\[
  \bar{\psi}_{j \to 1}(x_1, x_j) = \int \cdots \int \prod_{i=1}^n \psi_i(x_{\alpha_i}) \, d(x_{2:n \backslash j} )
\]
for some index $j \in \{ 2,\ldots,n \}$. Here $\bar{\psi}_{j \to 1}(x_1, x_j)$ corresponds to compression of the cluster into a node $x_1$ with an open child position $x_j$.
\end{enumerate}
For concreteness, let us now take a look at a linear Gaussian graph in which we have (for $i = 1,\ldots,n$):
\[
\begin{split}
  p(x_i \mid x_{\gamma_i}) &= \mathcal{N}(x_i; \sum_{j \in \gamma_i} F_j \, x_j + c_i, Q_i), \\
  p(y_i \mid x_i)  &= \mathcal{N}(y_i; H_i \, x_i + d_i, R_i), 
\end{split}
\]
that is,
\[
\begin{split}
  \psi_i(x_{\alpha_i}) = \mathcal{N}(y_i; H_i \, x_i + d_i, R_i) \, \mathcal{N}(x_i; \sum_{j \in \gamma_i} F_j \, x_j + c_i, Q_i),
\end{split}
\]
where $\mathcal{N}(x ; \mu,\Sigma)$ denotes a multivariate Gaussian probability density with mean vector $\mu$ and covariance matrix $\Sigma$. The representation of a node thus consists of the $|\gamma_i|$ matrices $\{ F_j~:~j \in \gamma_i\}$ along with $c_i$, $Q_i$, $y_i$, $H_i$, $d_i$, and $R_i$.

The implementation of the indegree-zero cluster operation (1) above requires just one primitive operation: The elimination of a leaf. We can repeat this operation until the whole tree is reduced into a single node. However, we need to ensure that we are able to do this operation in constant memory per node. Luckily, this is what happens in the Gaussian case.

Let us now consider a tree where we have added an additional node $x_{n+1}$ which is attached to the node $j$. What happens is that this adds a new child to node $j$:
\[
  \psi_j(x_{\alpha_j}) \to \tilde{\psi}_j(x_{\alpha_j}, x_{n+1}),
\]
and we also need to multiply with the leaf potential $\psi_{n+1}(x_{n+1})$. Thus, the joint potential is $\psi(x_{1:n+1}) =$
\[
\begin{split}
   \left[ \prod_{i=1}^{j-1} \psi_i(x_{\alpha_i}) \right]
   \, \tilde{\psi}_j(x_{\alpha_j}, x_{n+1}) \, \psi_{n+1}(x_{n+1})
   \left[ \prod_{k=j+1}^{n} \psi_k(x_{\alpha_k}) \right],
\end{split}
\label{eq:psi_np1}
\]
which we want to integrate over everything but $x_1$ in the present indegree-zero cluster case and over everything but $x_1,x_j$ in the indegree-one cluster case below. The elimination of the leaf in both cases corresponds to integration over $x_{n+1}$. 

The integration over $x_{n+1}$ can be done in closed form in the Gaussian case. In practice, it consists of computing the posterior covariance and mean parameters of $\psi_{n+1}(x_{n+1})$ which are 
\[
\begin{split}
  \tilde{Q}_{n+1} &= \left[ Q_{n+1}^{-1} + H_{n+1}^\top R_{n+1}^{-1} H_{n+1}   \right]^{-1}, \\
  \tilde{b}_{n+1} &= \tilde{Q}_{n+1} \left[ H_{n+1}^\top R_{n+1}^{-1} (y_{n+1} - d_{n+1}) + Q_{n+1}^{-1} c_{n+1} , \right],
\end{split}
\]
and then fusing them to the mean and covariance parameters of its parent node:
\[
\begin{split}
  c_j &\leftarrow F_{n+1} \, \tilde{b}_{n+1} + c_j, \\
  Q_j&\leftarrow  Q_j + F_{n+1} \tilde{Q}_{n+1} F_{n+1}^\top,
\end{split}
\]
which both are operations that can be done in constant memory.

For implementing the indegree-one cluster operation (2), we can first use the leaf elimination procedure above repeatedly to reduce the indegree-one cluster into a single indegree-one path. What we then have left is a path of the form (with reindexed intermediate nodes)
\[
  \psi_1(x_1,x_2) \, \psi_2(x_2,x_3) \, \psi_3(x_3,x_4) \, \times \cdots \times \, \psi_{j-1}(x_{j-1},x_j) 
\label{eq:path1}
\]
which we want to integrate over $x_{2:j-1}$. This can be implemented using pairwise combinations of the potentials, which can be done recursively as
\[
  \bar{\psi}_{m+1 \to 1}(x_1,x_{m+1}) = \int \bar{\psi}_{m \to 1}(x_1,x_{m}) \, \psi_{m}(x_{m},x_{m+1}) \, dx_m 
  \label{eq:psi_recursion}
\]
with initial condition $\bar{\psi}_{2 \to 1}(x_1,x_2) = \psi_1(x_1,x_2)$. This is a special case of the Kalman filter's associative rule derived in \cite{Sarkka:2021,Hassan:2021} (though backwards in time) and hence it can be implemented in constant additional memory for storing the temporary variables. The algorithm gives parameters $(A,b,C,\eta,J)$ which define a factorization of the form:
\[
\begin{split}
  \bar{\psi}_{j \to 1}(x_1,x_j) &\propto \mathcal{N}(x_1 ; A \, x_j + b, C) \, \mathcal{N}_I(x_j ; \eta, J) \\
  &= \mathcal{N}(x_1 ; A \, x_j + b, C) \, \mathcal{N}(x_j ; J^{-1} \, \eta, J^{-1}).
\end{split}
\]
The term $\mathcal{N}(x_j ; J^{-1} \, \eta, J^{-1})$ can now be fused to the measurement model at the node $j$ by finding an artificial measurement $z_j$ along with $\bar{H}_j$ and $\bar{R}_j$ such that $\mathcal{N}(x_j ; J^{-1} \, \eta, J^{-1}) \, N(y_j \mid H_j \, x_j + d_j, R_j) \propto N(z_j \mid \bar{H}_j \, x_j, \bar{R}_j)$. This can be done in constant memory by simple matrix and vector operations. In conclusion, the path compression just requires us to compute the parameters of the conditional distribution $\mathcal{N}(x_1 ; A \, x_j + b, C)$ and to form the artificial measurement model $N(z_j \mid \bar{H}_j \, x_j, \bar{R}_j)$ for the node $x_j$. This produces a new graph which we can continue to process recursively.

\subsection{Constructing Non-Standard Representations}\label{subsec:from_standard}
In \cref{sec:input_rep}, we saw how we can obtain the input in form of list-of-edges from various other representations in $O(\log D)$ rounds. In this section, having our algorithm in hand we will show that how
we can transform list-of-edges into other representations in $O(\log D)$ rounds. Let $A[(a_1,b_1),(a_2,b_2), \ldots (a_k,b_k)]$ be an array that contains a list-of-edges representation of a tree i.e. each index of $A$ contains a pair integer that represents a child node and its parent in tree.

\paragraph{List-of-Edges $\rightarrow$ Pointers-to-Parent:}

It is sufficient to sort $A$ by $a_i$, and then replace $(a_i,b_i)$ by $b_i$.

\paragraph{List-of-Edges $\rightarrow$ BFS-Traversal:}

We use our algorithm to compute the depth $d_i$ of each node $a_i$ in $O(\log D)$ rounds. Replace each entry $(a_i,b_i)$ with $(a_i,d_i)$
in the array $A$ now sort the array $A$ according to $d_i$ to obtain the BFS-traversal. The overall computation is done $O(\log D)$ in rounds.  

\paragraph{List-of-Edges $\rightarrow$ DFS-Traversal:}

First each node computes size of the subtree rooted at the node, which can be done in $O(\log D)$ rounds using our algorithm. Let $v_1,v_2, \ldots v_k$ be the children of a node $u$ such $t_i$ is size of the subtree rooted at $v_i$. Label the edge $(v_i,u)$ with the value $\sum_{j<i} t_j$ (note that $(v_1,u)$ gets label $0$). This is prefix-sum operation, which can be done $O(1)$ rounds in MPC. Let $l(u,v)$ denote the label of an edge $(u,v)$. Now we can compute the DFS-traversal time-stamp for each node $v$, denoted as $t(v)$ follows: set $t(v) = t(\text{parent}(w))+ l(v,\text{parent}(w))+ 1$. This is dynamic programming problem that we can solve in $O(\log D)$ rounds. Sorting the nodes according to their time-stamp will give us the DFS-traversal. 

\paragraph{DFS-Traversal $\rightarrow$ String-of-Parentheses:}

Given array $A$ representing the DFS-traversal, we find
the depth of each node in $O(\log D)$ rounds. Let $d_i$ be the depth of node $i$ in the tree. Each computer scans its part of the array from left to right and can
compute its part of the string as follows. Repeat the following steps for an index (or node in the tree) $i$ starting from $0$.
\begin{itemize}
    \item If $i$ is the only node in tree, add ``()'' in the string and exit.
    \item If $i$ is the root node, add ``('' in the string.
    \item If $d_{i+1}= d_i+1$, add ``('' in the string.
    \item If $d_{i}+k = d_i$, add $\underbrace{)) \ldots )}_\text{$k$-times}$ ( in the string.
    \item If $i$ is the last index in the array, add ``))'' in the string, one for the node and one for the root node.        
\end{itemize}

\section{Conclusions}\label{sec:conclusion}

In this work, we showed how a broad class of \emph{dynamic programming problems} can be solved in trees in the MPC model, with a relatively simple three-step approach:
turn the input into a standard representation in $O(\log D)$ rounds,
construct a hierarchical clustering in $O(\log D)$ rounds, and
solve the problem of interest in $O(1)$ rounds.
We expect that the hierarchical clustering will find applications also beyond the scope of dynamic programming problems.

One key open question is what happens once we step outside trees. The natural first step would be to consider bounded-treewidth graphs. Is it possible to find a similar hierarchical clustering efficiently also in bounded-treewidth graphs? And if so, does it still let us solve dynamic programming problems in constant time, given the hierarchical clustering?

\section*{Acknowledgment}
	We are grateful to Alkida Balliu, Darya Melnyk, and Dennis Olivetti for several fruitful discussions, and to the anonymous reviewers for their helpful feedback on prior versions of this work.
	This work was supported in part by the Academy of Finland, Grants 321901 (Gupta and Vahidi) and 334238 (Latypov and Pai).

\bibliographystyle{plainurl}
\bibliography{mybib}

\end{document}